\documentclass[a4paper,UKenglish,cleveref, autoref, thm-restate]{lipics-v2021}


\usepackage{graphicx}
\usepackage[all]{xy}

\usepackage{amsmath,amssymb,amsfonts,stmaryrd}
\usepackage{quiver}

\usepackage[utf8]{inputenc}

\usepackage[margin=8pt,font=small,labelfont=bf, labelsep=endash]{caption}

\usepackage{tikzit}

\tikzstyle{observation}=[text=blue]
\tikzstyle{morphism}=[fill=white, draw=black, shape=rectangle]
\tikzstyle{medium box}=[fill=white, draw=black, shape=rectangle, minimum width=0.8cm, minimum height=0.9cm]
\tikzstyle{large morphism}=[fill=white, draw=black, shape=rectangle, minimum width=1.7cm, minimum height=1cm]
\tikzstyle{bn}=[fill=black, draw=black, shape=circle, inner sep=1.5pt]
\tikzstyle{state}=[fill=white, draw=black, regular polygon, regular polygon sides=3, minimum width=0.8cm, shape border rotate=180, inner sep=0pt]
\tikzstyle{effect}=[fill=white, draw=black, regular polygon, regular polygon sides=3, minimum width=0.8cm, shape border rotate=0, inner sep=0pt]
\tikzstyle{medium state}=[fill=white, draw=black, regular polygon, regular polygon sides=3, minimum width=1.3cm, inner sep=0pt, shape border rotate=180]
\tikzstyle{medium effect}=[fill=white, draw=black, regular polygon, regular polygon sides=3, minimum width=1.3cm, inner sep=0pt, shape border rotate=0]
\tikzstyle{large state}=[fill=white, draw=black, regular polygon, regular polygon sides=3, minimum width=2.2cm, shape border rotate=180, inner sep=0pt]
\tikzstyle{wn}=[fill=white, draw=black, shape=circle, inner sep=1.5pt]
\tikzstyle{treenode}=[fill=white, draw=none, shape=circle]

\tikzstyle{arrow}=[->]
\tikzstyle{dashed box}=[-, dashed]
\tikzstyle{condition}=[draw=blue, dashed]

\pgfdeclarelayer{edgelayer}
\pgfdeclarelayer{nodelayer}
\pgfsetlayers{edgelayer,nodelayer,main}
\tikzstyle{none}=[]

\tikzset{baseline=(current  bounding  box.center)}


\usepackage{color}
\definecolor{deepblue}{rgb}{0,0,0.5}
\definecolor{deepred}{rgb}{0.6,0,0}
\definecolor{deepgreen}{rgb}{0,0.5,0}
\definecolor{darkgray}{rgb}{0.5,0.5,0.5}


\usepackage{listings}

\DeclareFixedFont{\ttb}{T1}{txtt}{bx}{n}{9} 
\DeclareFixedFont{\ttm}{T1}{txtt}{m}{n}{9}  

\lstset{            
	mathescape=true,
	frame=tb,                       
	aboveskip=0.5cm,        
	belowskip=0.5cm,	
	showstringspaces=false
}

\lstdefinestyle{python_ppl}{
	language=Python,
	basicstyle=\ttm,
	otherkeywords={let, to},          
	keywordstyle=\ttb\color{deepblue},
	emph={Gauss,N,condition, =:=, observe, sample, score, normal, gp_sample},     
	emphstyle=\ttb\color{deepred},    
	stringstyle=\color{deepgreen}  
}

\lstdefinelanguage{CustomML}{
	keywords={match, with, rec, true, false, fun, return, let, in, if, then, else, type, val, module, sig, end, ref, struct},
	keywordstyle=\color{deepblue}\bfseries,
	ndkeywords={ref},
	ndkeywordstyle=\color{darkgray}\bfseries,
	identifierstyle=\color{black},
	sensitive=false,
	comment=[l]{//},
	morecomment=[s]{/*}{*/},
	commentstyle=\color{darkgray}\ttfamily,
	stringstyle=\color{red}\ttfamily,
	morestring=[b]',
	morestring=[b]"
}

\lstdefinestyle{ml_ppl}{
	language=CustomML,
	basicstyle=\ttm,
	otherkeywords={},          
	keywordstyle=\ttb\color{deepblue},
	emph={Gauss,N,condition, =:=, observe, sample, score, normal, gp_sample},     
	emphstyle=\ttb\color{deepred},    
	stringstyle=\color{deepgreen}
}

\lstdefinelanguage{JavaScript}{
	keywords={typeof, new, true, false, catch, function, return, null, catch, switch, var, if, in, while, do, else, case, break},
	keywordstyle=\color{blue}\bfseries,
	ndkeywords={class, export, boolean, throw, implements, import, this},
	ndkeywordstyle=\color{darkgray}\bfseries,
	identifierstyle=\color{black},
	sensitive=false,
	comment=[l]{//},
	morecomment=[s]{/*}{*/},
	commentstyle=\color{darkgray}\ttfamily,
	stringstyle=\color{red}\ttfamily,
	morestring=[b]',
	morestring=[b]"
}

\lstdefinestyle{webppl}{
	language=JavaScript,
	basicstyle=\ttm,
	otherkeywords={let},            
	keywordstyle=\ttb\color{deepblue},
	emph={flip, condition, factor, sample, normal, score, observe},
	emphstyle=\ttb\color{deepred},    
	stringstyle=\color{deepgreen}
}

\lstdefinestyle{prolog}{
	language=Prolog,
	basicstyle=\ttm,         
	keywordstyle=\ttb\color{deepblue},
	emphstyle=\ttb\color{deepred},    
	stringstyle=\color{deepgreen}, 
}

\lstdefinestyle{funprolog}{
	language=Prolog,
	basicstyle=\ttm,
	otherkeywords={let, in, return},            
	keywordstyle=\ttb\color{deepblue},
	emphstyle=\ttb\color{deepred},    
	stringstyle=\color{deepgreen}, 
}

\newcommand*{\mlstinline}[1]{\text{\lstinline|#1|}}
\newcommand*{\code}[1]{\lstinline|#1|}

\newcommand{\cat}{\mathbb}
\newcommand{\catname}{\mathsf}
\newcommand{\C}{\cat C}

\newcommand{\id}{\mathsf{id}}
\newcommand{\supp}{\mathsf{supp}}

\newcommand{\R}{\mathbb R}

\newcommand{\N}{\mathcal N}

\newcommand{\eq}{\mathrel{\scalebox{0.4}[1]{${=}$}{:}{\scalebox{0.4}[1]{${=}$}}}}

\newcommand{\defeq}{\stackrel{\text{def}}=}

\newcommand{\cond}{\catname{Cond}}
\newcommand{\gauss}{\catname{Gauss}}
\newcommand{\gaussex}{\catname{GaussEx}}
\newcommand{\gaussrel}{\catname{GaussRel}}
\newcommand{\linrel}{\catname{LinRel}}

\newcommand{\vect}{\mathsf{Vec}}
\newcommand{\cmon}{\mathsf{CMon}}
\newcommand{\aff}{\catname{Aff}}

\newcommand{\col}{\mathsf{col}}

\newcommand{\lin}[1]{\mathsf{Lin}_{#1}}

\newcommand{\hide}[1]{{}}




\title{A Category for unifying Gaussian Probability and Nondeterminism} 

\titlerunning{Combining Gaussian Probability with Nondeterminism} 

\author{Dario {Stein}}{iHub, Radboud University Nijmegen, The Netherlands}{dario.stein@ru.nl}{}{}
\author{Richard {Samuelson}}{Humming Inc., United States of America}{richard@heyhumming.com}{}{}

\authorrunning{D. Stein} 

\Copyright{Dario Stein, Richard Samuelson} 

\begin{CCSXML}
	<ccs2012>
	<concept>
	<concept_id>10003752.10003753.10003757</concept_id>
	<concept_desc>Theory of computation~Probabilistic computation</concept_desc>
	<concept_significance>500</concept_significance>
	</concept>
	<concept>
	<concept_id>10003752.10010124.10010131.10010133</concept_id>
	<concept_desc>Theory of computation~Denotational semantics</concept_desc>
	<concept_significance>500</concept_significance>
	</concept>
	<concept>
	<concept_id>10003752.10010124.10010131.10010137</concept_id>
	<concept_desc>Theory of computation~Categorical semantics</concept_desc>
	<concept_significance>500</concept_significance>
	</concept>
	<concept>
	<concept_id>10002950.10003648</concept_id>
	<concept_desc>Mathematics of computing~Probability and statistics</concept_desc>
	<concept_significance>500</concept_significance>
	</concept>
	</ccs2012>
\end{CCSXML}

\ccsdesc[500]{Theory of computation~Probabilistic computation}
\ccsdesc[500]{Theory of computation~Categorical semantics}
\ccsdesc[500]{Mathematics of computing~Probability and statistics}

\keywords{systems theory, hypergraph categories, Bayesian inference, category theory, Markov categories} 

\category{} 

\relatedversion{} 



\acknowledgements{It has been useful to discuss this work with many people. Particular thanks go to Tobias Fritz, Bart Jacobs, Dusko Pavlovic, Sam Staton and Alexander Terenin.}

\relatedversiondetails[cite={stein2022decorated}]{Previous Version}{https://arxiv.org/abs/2204.14024v1}

\nolinenumbers 

\EventEditors{Paolo Baldan and Valeria de Paiva}
\EventNoEds{2}
\EventLongTitle{10th Conference on Algebra and Coalgebra in Computer Science (CALCO 2023)}
\EventShortTitle{CALCO 2023}
\EventAcronym{CALCO}
\EventYear{2023}
\EventDate{June 19--21, 2023}
\EventLocation{Indiana University Bloomington, IN, USA}
\EventLogo{}
\SeriesVolume{270}
\ArticleNo{13}

\begin{document}

\maketitle

\begin{abstract} 
We introduce categories of extended Gaussian maps and Gaussian relations which unify Gaussian probability distributions with relational nondeterminism in the form of linear relations. Both have crucial and well-understood applications in statistics, engineering, and control theory, but combining them in a single formalism is challenging. It enables us to rigorously describe a variety of phenomena like noisy physical laws, Willems' theory of open systems and uninformative priors in Bayesian statistics. The core idea is to formally admit vector subspaces $D \subseteq X$ as generalized uniform probability distribution. Our formalism represents a first bridge between the literature on categorical systems theory (signal-flow diagrams, linear relations, hypergraph categories) and notions of probability theory. 
\end{abstract}

\section{Introduction}

Modelling the behavior of systems under uncertainty is of crucial importance in engineering and computer science. We can distinguish two different kinds of uncertainty:
\begin{itemize}
\item Probabilistic uncertainty means we may not know the exact value of some quantity, like a measurement error, but we do know the statistical distribution of such errors. A typical such distribution is the \emph{normal (Gaussian) distribution} $\mathcal N(\mu,\sigma^2)$ of mean $\mu$ and variance~$\sigma^2$.
\item Nondeterministic uncertainty models complete ignorance of a quantity. We know which values the quantity may feasibly assume but have no statistical information beyond that. Nondeterministic uncertainty can be modelled using subsets $R \subseteq X$ which identify the feasibles values. In practice, such subsets are often characterized by equational constraints such as natural laws.
\end{itemize}

\noindent Systems may be subject to both probabilistic and nondeterministic constraints, but describing such systems mathematically is more challenging. A classical treatment is Willems' theory of \emph{open stochastic systems} \cite{willems:oss,willems:constrained}, where `openness' in his terminology refers to nondeterminism or lack of information. We recall a simple example:


\begin{example}[Noisy resistor]\label{ex:resistor}
For a resistor of resistance $R$, Ohm's law constrains pairs $(V,I)$ of voltage and current to lie in the subspace $D = \{ (V,I) : V = RI \}$. This is a relational constraint -- values must lie in $D$, but we have no further statistical information about which values the system takes. In a realistic system, thermal noise is always present; such a noisy system is better modelled by the equation
\begin{equation}
V = RI + \epsilon
\end{equation}
where $\epsilon \sim \N(0,\sigma^2)$ is a Gaussian random variable with some small variance $\sigma^2$. Willems notices that the variables $V,I$ are not random variables in the usual sense; we have not associated any distribution to them. On the other hand, the quantity $V-RI$ is a honest random variable. Furthermore, if we supply a fixed voltage $V_0$, we can solve for $I$ and
\begin{equation}
I = R^{-1}(V_0-\epsilon) \label{eq:interconnection}
\end{equation}becomes a classical (Gaussian) random variable. Willems calls this `interconnection' of systems.
\end{example}
Willems models the `openness' of the stochastic systems by endowing the outcome space $\R^2$ with an unusually coarse $\sigma$-algebra $\mathcal E$ to formalize the lack of information. Measurable sets are restricted to the form $\{(V,I) : V-RI \in A \}$ for $A \subseteq \R$ Borel. The Gaussian probability measure is then only defined on $\mathcal E$, which essentially makes it a measure on the quotient space $\R^2/D$. We purely formally define an \emph{extended Gaussian distributions} on a space $X$ as a pair $(D,\psi)$ of a subspace $D$ and a Gaussian distribution on $X/D$. In particular, we can think of any subspace $D$ as an extended Gaussian distribution $(D,0)$. Operationally, sampling a point $x \sim D$ means picking it nondeterministically from $D$. Every extended Gaussian distribution can be seen as a formal sum $\psi + D$ of a Gaussian contribution $\psi$ and a nondeterministic contribution $D$.

In our approach, the noisy resistor is described by a single extended Gaussian distribution where $D$ is the subspace for for Ohm's law, and $\psi$ is Gaussian noise in a direction orthogonal to $D$. The marginals $V,I$ are themselves extended Gaussian distributions: we find that $V \sim \R$ and $I \sim \R$, that is they are picked nondeterministically from the real line, so in this sense we have no information about them. We also find that $V-RI \sim \N(0,\sigma^2)$ follows a classical Gaussian distribution without any nondeterministic contribution. The interconnection \eqref{eq:interconnection} is obtained as an instance of probabilistic conditioning $V=V_0$. We compare our approach to the one of Willems in \Cref{sec:open_system}. \\

\noindent We now describe a completely different situation where it makes sense to admit subspaces as idealized probability distributions, namely uninformative priors in Bayesian inference:

\begin{figure}
	\includegraphics[width=0.4\textwidth]{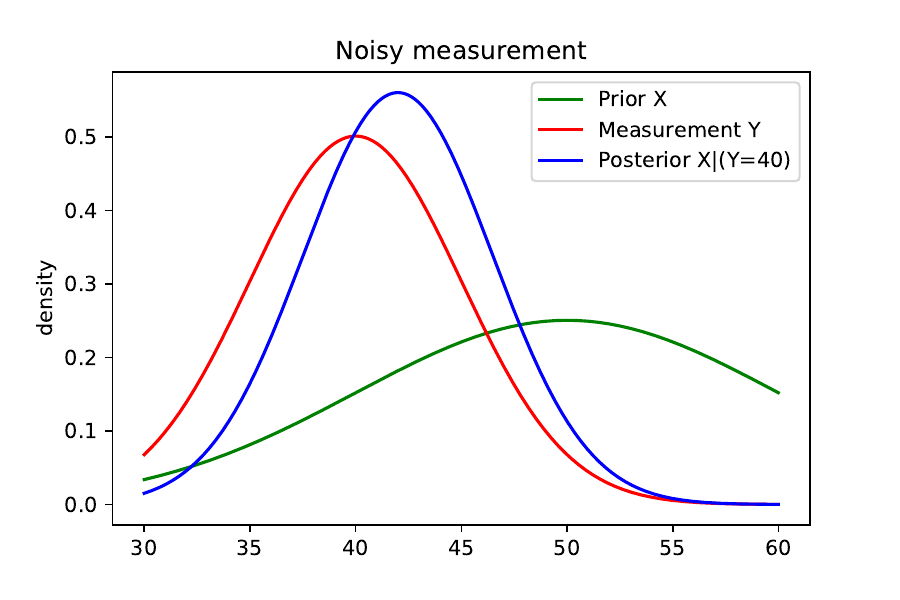}
	\caption{Gaussian prior and posterior in a noisy measurement example \label{fig:noisymmt}}
\end{figure}

\begin{example}[Uninformative Priors]\label{ex:inference}
Our prior experience tell us that we expect the mass $X$ of some object to be normally distributed with mean $50$ and variance $100$. We use a noisy scale to obtain a measurement of $Y=40$. If the scale error has variance of $25$, we can compute our \emph{posterior} belief over $X$, which turns out to be \footnote{see appendix \ref{sec:mmt} for the calculation} $X|(Y = 40) \sim \N(42,20)$. Here, the influence of the prior has corrected the predicted value to lie slightly above the measured value, and have smaller overall variance (see \Cref{fig:noisymmt}).

If we had no prior information at all about $X$, the posterior should simply reflect the measurement uncertainty $\N(40,25)$. We can model this by putting a larger and larger variance on $X$. However, the limit of distributions $\N(50,\sigma^2)$ for $\sigma^2 \to \infty$ does not exist in any measure-theoretic sense, because it would approach the zero measure on every measurable set. There exists no uniform probability distribution over the real line. In practice, one can sometimes pretend (using the method of \emph{improper priors}, e.g. \cite{gelman,hedegaard:gaussian_random_fields}) that $X$ is sampled from the Lebesgue measure $\lambda$ (with constant density $1$). This measure fails to be normalized, however the resulting density calculations may yield the correct probability measures. 
\end{example}

\noindent Our theory of extended Gaussians avoids unnormalized measures altogether: The nondeterministic distribution $X \sim \R$ is used as the uninformative prior on $X$, which gives the desired results, and $\R$ can be seen as the limit of $\mathcal N(50,\sigma^2)$ for $\sigma^2 \to \infty$ in an appropriate sense. 

\subsection{Contribution}
The paper is devoted to making our manipulations of subspaces as generalized probability distributions rigorous. We introduce a class of mathematical objects called \emph{extended Gaussian distributions} and show that such distributions can be manipulated (combined, pushed forward, marginalized) as if they were ordinary probability distributions. Importantly, extended Gaussians remain closed under taking conditional distributions, which means we can use them in applications such as statistical learning and Kalman filtering. The subspace $\R$, seen as a uniform distribution, formalizes the role of an improper prior. \\

\noindent Describing distributions on a space $X$ is only the first step. In order to build up systems in a compositional way, we need to understand transformations between spaces $X \to Y$. Category theory is a widely used language to study the composition of different kinds of systems. We identify two relevant flavors in the literature
\begin{itemize}
\item categorical and diagrammatic methods for engineering and control theory, such as graphical linear algebra (e.g. \cite{graphical_la}), cartesian bicategories (e.g. \cite{carboni:cartesian_bicategories}) and signal-flow diagrams (\cite{bsz,bonchi:cat_signal_flow,bonchi:interacting_hopf,baez:control,baez:props}). A central notion is that of a \emph{hypergraph category} \cite{hypergraphcats}, and prototypical models are the categories of linear maps or linear relations. Willems' system theory has been explored in these terms \cite{fong:open_systems}, but probability is absent from these developments. 
\item categorical models of probability, such as copy-delete categories \cite{cho_jacobs} and Markov categories \cite{fritz,fritz:definetti,fritz:kantorovich,fritz:zero_one}. Prototypical models are stochastic matrices, or the category $\gauss$ of affine-linear maps with Gaussian noise.
\end{itemize}

\noindent Despite these developments, it has been challenging to combine probability and nondeterminism into a single model -- mathematical obstructions to achieving this are described in \cite{nogo,weakdist}. Our work is a first successful step in combining these bodies of literature: We define a category $\gaussex$ of extended Gaussian maps which can seen both as extending linear relations with probability, or extending Gaussian probability with nondeterminism (or improper priors). Gaussian probability is a very expressive fragment of probability with a variety of useful applications (Gaussian processes, Kalman filters, Bayesian Linear Regression). 

Our definition of the Markov category $\gaussex$ uses a special case of the widely studied construction of \emph{decorated cospans} \cite{fong:decorated_cospans,fong:decorated_corelations,fong:sevensketches,fong:open_systems}. We recall that $\gaussex$ has conditionals, which is the categorical formulation of conditional probability used to formalize inference problems like \Cref{ex:inference}.

We then define a hypergraph category of \emph{Gaussian relations}, which allows arbitrary decorated cospans to allow the possibility of failure and explicit conditioning in the categorical structure. Hypergraph categories are highly symmetrical categories with an appealing duality theory. To our knowledge, probabilistic models of hypergraph categories are novel. The self-duality of hypergraph categories is reflected in the duality between covariance and precision forms, which takes a particularly canonical form for extended Gaussians. We elaborate this in \Cref{sec:variance_precision}.

The following table summarizes the relationships between our constructions

\begin{center}
\begin{tabular}{ |c|c|c| } 
\hline
& & (adding Gaussian noise) \\
 \hline
 & linear maps & Gaussian maps  \\ 
 (adding nondeterminism) & total linear relations & extended Gaussian maps  \\ 
 (adding failure) & linear relations & Gaussian Relations  \\ 
 \hline
\end{tabular}
\end{center}

\subsection{Outline}

We assume basic familiarity of the reader with linear algebra, (monoidal) categories and string diagrams; an overview can be found in the appendix \Cref{sec:appendix}. All categories considered will be symmetric monoidal and have a copy-delete structure. All vector spaces are assumed finite dimensional. 

We begin \Cref{sec:gaussex_distributions} with a recap of Gaussian probability and continue to define extended Gaussian distributions as Gaussian distributions on quotient spaces. We extend this definition to a notion of \emph{extended Gaussian map} in \Cref{sec:gaussex} and establish the structure of a Markov category. We give the construction both in elementary terms and using the formalism of decorated cospans in \Cref{sec:cospans}. 

In \Cref{sec:relations}, we define a hypergraph category of Gaussian relations, which extends extended Gaussian maps with the possibility of failure and conditioning. This makes use of the discussion of conditionals in \Cref{sec:conditioning}.

The idea of extended Gaussian distributions has appeared in several places independently, for different motivations. We conclude the paper with an extended `Related Works' \Cref {sec:connections}, which compares these approaches in detail, and gives perspective in terms of measure theory, topology, and program semantics. 

\section{Extended Gaussian Distributions}\label{sec:gaussex_distributions}

We begin with a short review of Gaussian probability; we assume basic concepts of linear algebra but have summarize the terminology in the appendix (\Cref{app:la}). For a more detailed introduction to Gaussian probability see e.g. \cite{alexthesis,lauritzen}. \\

\noindent The normal distribution or Gaussian distribution $\mathcal N(\mu,\sigma^2)$ of mean $\mu$ and variance $\sigma^2$ is defined by having the density function
\[ f(x) = \frac{1}{\sqrt{2\pi\sigma^2}} \exp\left(-\frac{(x-\mu)^2}{2\sigma^2}\right) \]
with respect to the Lebesgue measure. This is generalized to multivariate normal distributions as follows: Every \emph{Gaussian distribution} on $\R^n$ can be written uniquely as $\N(\mu, \Sigma)$ where $\mu \in \R^n$ is its \emph{mean} and $\Sigma \in \R^{n \times n}$ is a symmetric positive-semidefinite matrix called its \emph{covariance matrix}. Note that a vanishing covariance matrix is explicitly allowed; in that case the Gaussian reduces to a point-mass $\delta_x = \N(x,0)$. We will sometimes abbreviate the point-mass $\delta_x$ by $x$ if the context is clear. 

We write $\gauss(\R^n)$ for the set of all Gaussian distributions on $\R^n$. The \emph{support} of $\N(\mu,\Sigma)$ is the affine subspace $\mu + \col(\Sigma)$ where $\col(\Sigma)$ is the column space (image) of $\Sigma$. Gaussian distributions transform as follows under linear maps: If $A \in \R^{m \times n}$ is a matrix, then the pushforward distribution is given by
\begin{equation} A_*(\N(\mu, \Sigma)) = \N(A\mu, A\Sigma A^T) \label{eq:gauss_pushfwd} \end{equation}
Product distributions are formed as follows
\begin{equation} \N(\mu, \Sigma) \otimes \N(\mu', \Sigma') = \N\left(\begin{pmatrix} \mu \\ \mu' \end{pmatrix},
\begin{pmatrix} \Sigma & 0 \\ 0 & \Sigma \end{pmatrix}\right) \label{eq:gauss_product} \end{equation}
We write addition $+$ between distributions to indicate the distribution of the sum of two independent variables (convolution). For example, if $X,Y \sim \N(0,1)$ are independent, then $X + Y \sim \N(0,2)$ because variance is additive for independent variables. We have
\[ \N(\mu, \Sigma) + \N(\mu',\Sigma') = \N(\mu + \mu', \Sigma + \Sigma') \]
which can be confirmed by first forming the product distribution \eqref{eq:gauss_product} and pushing forward under the addition map \eqref{eq:gauss_pushfwd}. The set $\gauss(\R^n)$ forms a commutative monoid with convolution $+$ and neutral element $0$. \\

We now wish to a combine Gaussian distributions on $\R^n$ with uninformative (nondeterministic) distributions along a vector subspace $D$. 
\begin{definition}
An extended Gaussian distribution on $\R^n$ is a pair $(D,\psi)$ of a subspace $D \subseteq \R^n$ and a Gaussian distribution $\mu$ on the quotient $\R^n/D$. Following \cite{willems:oss}, we call the space $D$ the \emph{(nondeterministic) fibre} of the extended Gaussian. We write $\gaussex(\R^n)$ for the set of all extended Gaussian distributions on $\R^n$.
\end{definition}
There are several equivalent ways to formalize the notion of a Gaussian distribution over this quotient space.
\begin{enumerate}
\item We identify the quotient space $\R^n/D$ with a complementary subspace $K$ of $D$, and give a Gaussian distribution on that space. This has the advantage of only involving Euclidean spaces, and we can use matrices to represent linear maps. 
\item We develop a coordinate-free definition of Gaussian distributions on arbitrary vector spaces $X$ so we can then interpret the construction $\gauss(\R^n/D)$ directly. This will be useful for the duality results in \Cref{sec:variance_precision}.
\item Willems keeps the spaces $\R^n$ but equips them with restricted $\sigma$-algebras. This corresponds to a quotient on the level of measurable spaces. We discard this perspective for now but will return to it in \Cref{sec:open_system}.
\end{enumerate}

For now, it doesn't matter which formalization we choose. We will build intuitions with some examples: \vspace{-0.2cm}

\begin{enumerate}
\item Every Gaussian distribution $\psi$ becomes an extended Gaussian distribution with $D=0$; that is the nondeterministic contribution vanishes (is constantly zero).
\item Every subspace $D$ becomes an extended Gaussian distribution with $\psi=0$; that is the probabilistic contribution vanishes. By slight abuse of notation, we will simply write $D$ or $\psi$ for the embedding of subspaces or distributions into extended Gaussian distributions.
\item If the nondeterministic fibre $D=\R^n$ is the whole space, then $\gauss(\R^n/D) = \{0\}$. Hence, the only extended Gaussian with fibre $D$ is the subspace $D$ itself. This distribution expresses total ignorance. 
\item We can easily classify all extended Gaussian distributions on $\R$. The fibre $D$ must be either $0$ or $\R$, so we have $\gaussex(\R) = \gauss(\R) \,\dot{\cup}\, \{ \R \}$.
\item The possible pairs $(V,I)$ satisfying Ohm's law are given by the subspace $D=\{(V,I) : V=RI \}$. For noisy Ohm's law, we let $\epsilon \sim \mathcal N(0,\sigma^2)$ and notice that the random vector $w = (-1,R) \cdot \epsilon$ is orthogonal to $D$. Its covariance matrix is
\[ \Sigma_w = \begin{pmatrix}
    1 & -R \\ -R & R^2
\end{pmatrix}\]
and thus the distribution of the noisy law is given by $(\mathcal N(0,\Sigma_w),D)$. 
\end{enumerate}

\noindent We may think of the extended distribution $(D,\psi)$ as being composed of nondeterminstic noise along the space $D$, and Gaussian noise $\psi$. It is evocative to write the extended Gaussian distribution as a formal sum $\psi + D$ of distribution and a subspace. The distribution $\psi$ is not unique because the nondeterministic noise absorbs components of $\psi$ that are parallel to $D$. This is analogous to how we use notation like $3 + 2\mathbb Z$ for elements of quotient groups (cosets). This notation is formally justified by the formula for addition of extended Gaussians, as discussed next.

\subsection{Transformations of Extended Gaussians}

Extended Gaussian distributions support the same basic transformations as ordinary Gaussians. 
If $A$ is a matrix, we push forward the Gaussian and nondeterministic contribution separately,
\[ A_*(\psi + D) = A_*\psi + A[D] \]
where $A[D] = \{ Ax : x \in D \}$ denotes the image subspace. Tensor and sum are similarly component-wise
\begin{align*}
(\mu + D) \otimes (\psi + E) = (\mu \otimes \nu) + (D \times E) \qquad
(\mu + D) + (\nu + E) = (\mu + \nu) + (D + E) 
\end{align*} 
Well-definedness is a corollary of the next section, because those operations are special cases of the categorical structure of $\gaussex$.

\begin{example}\label{ex:r_inv}
The subspace $\R \in \gaussex(\R)$ absorbs all additive contributions, e.g. $42 + \R = \N(0,1) + \R = \R$
\end{example}

\section{A Category of Extended Gaussian maps}\label{sec:gaussex}

After defining extended Gaussians on Euclidean spaces $X$, the next challenge is to develop a notion of \emph{extended Gaussian map} $X \to Y$ between spaces. We wish to define a category $\gaussex$ such that we recover extended Gaussian distributions as maps out of the unit space $0$, i.e. $\gaussex(X) \cong \gaussex(0, X)$.
The operations of pushforward, product and sum of distributions will be simple instances of categorical and monoidal composition in the category $\gaussex$. For purely Gaussian probability, the appropriate definition of a map is a linear function together with Gaussian noise, informally written $f(x) = Ax + \mathcal N(b,\Sigma)$. We begin by analyzing this construction before generalizing it to the extended Gaussian case.

\subsection{Decorated Linear Maps and the Category $\gauss$}
We write $\vect$ for the category of finite dimensional vector spaces. The category $\gauss$ \cite{fritz} is defined as follows: Objects are vector spaces $X$, and morphisms $X \to Y$ are pairs $(f,\psi)$ of a linear map $f : X \to Y$ and a Gaussian distribution $\psi \in \gauss(Y)$. The identity is given by $(\id_X, 0)$ and composition is given by pushing forward and addition of the noise, $(f,\xi) \circ (g,\psi) = (fg,\xi + f_*\psi)$.

It is straightforward to generalize the pattern of this construction: The set of distributions $\gauss(X)$ is a commutative monoid $(\gauss(X),+,0)$ and the assignment $X \mapsto \gauss(X)$ becomes a lax monoidal functor $\gauss : \vect \to \cmon$ from vector spaces to commutative monoids. By understanding a commutative monoid as a one-object category, the functor $\gauss : \vect \to \mathsf{Cat}$ is an indexed category, and the category $\gauss$ is the monoidal op-Grothendieck construction associated to this functor \cite{moeller2018monoidal}.

We do not use any special properties of Gaussian distributions, other than that they can be added and pushed forward. In other words, can think of the distribution $\psi$ as a purely abstract decoration on the codomain of the linear map $f$. Any functor $S : \vect \to \cmon$ can be used to supply such a decoration, because it it automatically inherits a lax monoidal structure (see below). In concrete terms, the op-Grothendieck construction can be described as decorated linear maps:
\begin{definition}
Let $S : \vect \to \cmon$ be a functor. The category $\lin{S}$ of $S$-decorated linear maps is defined as follows
\begin{enumerate}
	\item Objects are vector spaces $X$
	\item Morphisms are pairs $(f,s)$ where $f : X \to Y$ is a linear map and $s \in S(Y)$
	\item Composition is defined as follows: for $g : X \to Y$, $f : Y \to Z$, $s \in S(Y)$, $t \in S(Z)$ let
	\[ (f,t) \circ (g,s) = (fg,t + S(f)(s)) \]
	Note that addition takes place in the commutative monoid $S(Z)$.
\end{enumerate}
\end{definition}

There is a faithful inclusion $\vect \to \lin S$ sending $f$ to $(f,0)$. We argue that $\lin S$ has the structure of a symmetric monoidal category with the tensor $X \otimes Y = X \times Y$ on objects. For this, we first observe that $S$ is automatically lax monoidal: For $(s,t) \in S(X) \times S(Y)$, let $s \oplus t = S(i_X)(s) + S(i_Y)(t)$ where $i_X : X \to X \times Y, i_Y : Y \to X \times Y$ are the biproduct inclusions. We can now define the tensor of decorated map as $(f,s) \otimes (g,t) = (f \times g, s \oplus t)$. The monoidal category $\lin S$ is in general not cartesian; it does however inherit copy and delete maps from $\vect$. The category $\lin S$ is a Markov category if and only if deleting is natural, i.e. $S(0) \cong 0$, where $0$ denotes the terminal vector space/commutative monoid.

\begin{example}\label{ex:bottomrow}
The following categories are instances of decorated linear maps:
	\begin{enumerate}
		\item For $S(X)=0$, $\lin S$ is equivalent to $\vect$.
		\item For $S(X)=X$, $\lin S$ is equivalent to the category of affine-linear maps. A map $X \to Y$ consists of a pair $(f,y)$ with $f : X \to Y$ linear and $y \in Y$.
		\item For $S(X)=\gauss(X)$, $\lin S$ is (by construction) the category $\gauss$
	\end{enumerate}
\end{example}

\subsection{Decorated Cospans and Linear Relations}\label{sec:cospans}

Like for $\gauss$, we wish to define an extended Gaussian map as a linear map with extended Gaussian noise. The naive approach of considering linear maps decorated by $S=\gaussex$ is not fruitful, because the quotient by the nondeterministic fibre is not properly taken into account: For example, for any two linear maps $f,g : \R^n \to \R$, the decorated maps $f + \R$ and $g + \R$ should be considered equal (\Cref{ex:r_inv}). We can remedy this by considering maps into the quotient $X \to \R/\R$. This kind of behavior is precisely captured by (total) linear relations. 

\begin{lemma}[{{\Cref{app:la}}}]  \label{lemma:linrel_characterization}
To give a total linear relation $R \subseteq X \times Y$ is to give a subspace $D \subseteq Y$ and a linear map $X \to Y/D$.
\end{lemma}

\begin{definition}
An extended Gaussian map $X \to Y$ is a tuple $(D,f,\psi)$ where $D \subseteq Y$ is a subspace, $f : X \to Y/D$ and $\psi \in \gauss(Y/D)$.
\end{definition}
In order to describe composition of such maps, it is convenient to use the formalism of decorated cospans, which we recall now:

A cospan in a category $\cat C$ with finite colimits is a diagram of the form $X \xrightarrow{f} P \xleftarrow{g} Y$. We will identify two cospans if there exists an isomorphism $P \cong P'$ commuting with the legs. Equivalence classes of cospans can be seen as morphisms between $X$ and $Y$ in a category $\mathsf{Cospan}(\cat C)$, where composition is given by pushout
\begin{equation}\begin{tikzcd}
	& X &&&&& W \\
	X && X &&& P && Q \\
	&&&& X && Y && Z
	\arrow[from=3-5, to=2-6]
	\arrow[from=3-7, to=2-6]
	\arrow[from=3-7, to=2-8]
	\arrow[from=3-9, to=2-8]
	\arrow[from=2-8, to=1-7]
	\arrow[from=2-6, to=1-7]
	\arrow["\lrcorner"{anchor=center, pos=0.125, rotate=-45}, draw=none, from=1-7, to=3-7]
	\arrow["{\id_X}", from=2-1, to=1-2]
	\arrow["{\id_X}"', from=2-3, to=1-2]
\end{tikzcd} \label{eq:cospan} \end{equation}

\noindent The following classes of cospans deserve special attention:
\begin{enumerate}
\item a cospan whose right leg is an isomorphism is the same thing as a map $X \to Y$
\item a relation is a span $X \leftarrow R \rightarrow Y$ which is jointly monic. Dually, a \emph{co-relation} is a cospan $X \rightarrow P \leftarrow Y$ which is jointly epic \cite{fong:decorated_corelations}.
\item a partial map is a span $X \leftarrow R \rightarrow Y$ whose left leg is monic \cite{cockett2002restriction}. Dually we define a \emph{copartial map} to be a cospan $X \rightarrow P \leftarrow Y$ whose right leg is epic.
\end{enumerate}
Just as partial maps are maps out of subobjects, copartial maps are maps into quotients. It is worth noting that while the pushout of copartial maps is again a copartial map, co-relations are not closed under pushout. Instead, the an image factorization has to be used to compose them \cite{fong:decorated_corelations}. \Cref{lemma:linrel_characterization} can be rephrased as follows:

\begin{proposition}
To give a copartial map $X \xrightarrow{f} P \xleftarrow{p} Y$ in $\vect$ is to give a total linear relation $X \to Y$. The relation is obtained as $R = \{ (x,y) : f(x) = p(y) \}$.
\end{proposition}

We now use the abstract theory of decorated cospans \cite{fong:decorated_cospans,fong:thesis,fong:sevensketches} to add Gaussian probability to the cospans:

\begin{definition}[{{\cite{fong:decorated_cospans}}}]
Given a lax monoidal functor $S : (\cat C,+) \to (\catname{Set},\times)$, an $S$-decorated cospan is a cospan $X \rightarrow P \leftarrow Y$ together with a decoration $s \in S(P)$. Given composable cospans like in \eqref{eq:cospan}, the decoration of the composite is computed by the canonical morphism $S(P) \times S(Q) \to S(P + Q) \to S(W)$. The category of $S$-decorated cospans is written $S\catname{Cospan}(\mathbb C)$
\end{definition}

The category $\gauss$ is a special case of the decorated cospan construction, for cospans whose right leg is an identity. We can now define:  

\begin{definition}
The category $\gaussex$ of extended Gaussian maps is defined as the category of copartial maps in $\vect$, decorated by the functor $\gauss : \vect \to \cmon \to \catname{Set}$.
\end{definition}

Categories of decorated cospans are hypergraph categories \cite[\S~2]{fong:decorated_cospans} their monoidal structure is given by the coproduct $+$. As the subcategory of decorated copartial maps, extended Gaussians do inherit the symmetric monoidal and copy-delete structure, but are not a hypergraph category. To obtain a useful hypergraph category of Gaussian probability, we must study conditioning.

\section{A Hypergraph Category of Gaussian Relations}\label{sec:relations}

A hypergraph category extends the structure of a copy-delete category in two important ways
\begin{enumerate}
    \item there is a multiplication $\mu_X : X \otimes X \to X$ on every object, which we think of as a comparison operation. It succeeds if both inputs are equal (and return the input), and fails otherwise. In linear relations, comparison is the relation $\{ (x,x,x) : x \in X \}$. Multiplication is dual to copying. In a probabilistic setting, we propose to think of the comparison as conditioning on equality. The `cap' $X \otimes X \to I$ is denoted as $\eq$ \cite{dilavore2023evidential, exact_conditioning}. 
    \item there is a unit $u_X : I \to X$ on every object, dual to deletion. The unit is neutral with respect to the multiplication, i.e. conditioning on the unit has no effect. This suggests we should think of the unit as a uniform distribution, or an improper prior. Both in linear relations and extended Gaussians, the unit is the subspace $X \subseteq X$.
\end{enumerate}

We arrive at the following synthetic dictionary for probabilistic inference and constraints in hypergraph categories:

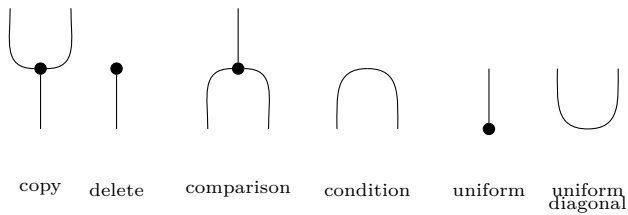
\begin{figure}[h]
    \centering
    \[ \begin{tikzpicture}[scale=0.4]
	\begin{pgfonlayer}{nodelayer}
		\node [style=none] (0) at (-8.5, -2) {};
		\node [style=none] (1) at (-6.5, -2) {};
		\node [style=bn] (2) at (-7.5, 0) {};
		\node [style=none] (4) at (-7.5, 2) {};
		\node [style=bn] (5) at (-14, 0) {};
		\node [style=none] (6) at (-13, 2) {};
		\node [style=none] (7) at (-15, 2) {};
		\node [style=none] (8) at (-14, -2) {};
		\node [style=none] (9) at (-11.5, -2) {};
		\node [style=bn] (10) at (-11.5, 0) {};
		\node [style=none] (13) at (-4.25, -2) {};
		\node [style=none] (14) at (-3.25, 0) {};
		\node [style=none] (15) at (-2.25, -2) {};
		\node [style=bn] (16) at (0.75, -2) {};
		\node [style=none] (17) at (0.75, 0) {};
		\node [style=none] (18) at (3, 0) {};
		\node [style=none] (19) at (4, -2) {};
		\node [style=none] (20) at (5, 0) {};
		\node [style=none] (23) at (-11.5, -4) {{\scriptsize delete}};
		\node [style=none] (24) at (-7.5, -4) {{\scriptsize comparison}};
		\node [style=none] (25) at (-3.25, -4) {{\scriptsize condition}};
		\node [style=none] (26) at (0.75, -4) {{\scriptsize uniform}};
		\node [style=none] (27) at (4, -4) {{\scriptsize uniform}};
		\node [style=none] (31) at (4, -4.5) {{\scriptsize diagonal}};
		\node [style=none] (33) at (-14, -4) {{\scriptsize copy}};
	\end{pgfonlayer}
	\begin{pgfonlayer}{edgelayer}
		\draw [in=90, out=-180, looseness=1.25] (2) to (0.center);
		\draw [in=90, out=0, looseness=1.25] (2) to (1.center);
		\draw (4.center) to (2);
		\draw [in=-90, out=180, looseness=1.25] (5) to (7.center);
		\draw [in=-90, out=0, looseness=1.25] (5) to (6.center);
		\draw (5) to (8.center);
		\draw (9.center) to (10);
		\draw [in=180, out=90, looseness=1.25] (13.center) to (14.center);
		\draw [in=90, out=0, looseness=1.25] (14.center) to (15.center);
		\draw (16) to (17.center);
		\draw [in=180, out=-90, looseness=1.25] (18.center) to (19.center);
		\draw [in=-90, out=0, looseness=1.25] (19.center) to (20.center);
	\end{pgfonlayer}
\end{tikzpicture} \]
    \caption{Dictionary for hypergraph categories}
    \label{fig:my_label}
\end{figure}

For example, the noisy measurement example \Cref{ex:inference} can be expressed in the following convenient way using hypergraph structure

    \[ \begin{tikzpicture}[scale=0.3]
	\begin{pgfonlayer}{nodelayer}
		\node [style=morphism] (0) at (-12.5, -2) {{\small $\mathcal N(50,100)$}};
		\node [style=bn] (1) at (-12.5, 0) {};
		\node [style=morphism] (2) at (-10.5, 2) {{\small $\mathcal N(-,25)$}};
		\node [style=none] (3) at (-14.5, 2) {};
		\node [style=none] (4) at (-10.5, 3) {};
		\node [style=none] (5) at (-8.5, 5) {};
		\node [style=none] (6) at (-6.5, 3) {};
		\node [style=none] (7) at (-14.5, 5) {};
		\node [style=morphism] (8) at (-6.5, -2) {{\small $40$}};
		\node [style=none] (9) at (-4.25, 1) {=};
		\node [style=morphism] (10) at (-1.75, -1.5) {{\small $\mathcal N(42,20)$}};
		\node [style=none] (11) at (-1.75, 3.5) {};
		\node [style=none] (12) at (-12.5, -1.5) {};
		\node [style=none] (13) at (-6.5, -1.5) {};
		\node [style=none] (14) at (-1.75, -1) {};
	\end{pgfonlayer}
	\begin{pgfonlayer}{edgelayer}
		\draw [in=-90, out=180] (1) to (3.center);
		\draw [in=-90, out=0] (1) to (2);
		\draw (7.center) to (3.center);
		\draw (2) to (4.center);
		\draw [in=-180, out=90] (4.center) to (5.center);
		\draw [in=90, out=0] (5.center) to (6.center);
		\draw (13.center) to (6.center);
		\draw (12.center) to (1);
		\draw (14.center) to (11.center);
	\end{pgfonlayer}
\end{tikzpicture}
 \]

We begin by recalling how conditioning works in the category $\gauss$, and prove that extended Gaussians remain closed under conditioning. We then define a hypergraph category $\catname{GaussRel}$ of Gaussian relations in which conditioning is internalized using a comparison operation.

\subsection{Conditioning}\label{sec:conditioning}

Gaussian distributions are self-conjugate; that is conditional distributions of Gaussians are themselves Gaussian. More precisely, given a joint distribution $\psi \in \gauss(X \times Y)$, the map which sends $x \mapsto \psi|_{X=x}$ is a Gaussian map $X \to Y$. This is captured using the following categorical definition: 

\begin{definition}[{{\cite[Definition~11.5]{fritz}}}]
A \emph{conditional} for a morphism $f : A \to X \otimes Y$ in a Markov category is a morphism $f_{|X} : X \otimes A \to Y$ which lets us reconstruct $f$ from its $X$-marginal as $f(x,y|a) = f_{|X}(y|x,a)f_X(y|a)$. In string diagrams, it satisfies
\[ \begin{tikzpicture}[scale=0.3]
	\begin{pgfonlayer}{nodelayer}
		\node [style=morphism] (0) at (-13.5, 1.5) {$f$};
		\node [style=none] (12) at (-14, 2) {};
		\node [style=none] (13) at (-13, 2) {};
		\node [style=none] (14) at (-14, 4.5) {};
		\node [style=none] (15) at (-13, 4.5) {};
		\node [style=none] (25) at (-10.5, 1.5) {=};
		\node [style=none] (26) at (-13.5, -1.5) {};
		\node [style=none] (27) at (-13.5, 1) {};
		\node [style=morphism] (28) at (-7.5, 0) {$f$};
		\node [style=none] (29) at (-8, 0.5) {};
		\node [style=none] (30) at (-7, 0.5) {};
		\node [style=bn] (31) at (-8, 3) {};
		\node [style=bn] (32) at (-7, 1.5) {};
		\node [style=bn] (33) at (-6.5, -2) {};
		\node [style=none] (34) at (-7.5, -0.25) {};
		\node [style=morphism] (35) at (-7, 5) {$f|_X$};
		\node [style=none] (36) at (-9, 4) {};
		\node [style=none] (37) at (-7, 7) {};
		\node [style=none] (38) at (-9, 7) {};
		\node [style=none] (39) at (-6.5, -3) {};
		\node [style=none] (40) at (-5.5, -0.5) {};
		\node [style=none] (41) at (-5.5, 2.5) {};
		\node [style=none] (42) at (-6.75, 4.5) {};
		\node [style=none] (43) at (-7.25, 4.5) {};
		\node [style=none] (44) at (-6.5, -4) {$A$};
		\node [style=none] (45) at (-13.5, -2.5) {$A$};
		\node [style=none] (46) at (-14, 5.5) {$X$};
		\node [style=none] (47) at (-13, 5.5) {$Y$};
		\node [style=none] (48) at (-9, 8) {$X$};
		\node [style=none] (49) at (-7, 8) {$Y$};
	\end{pgfonlayer}
	\begin{pgfonlayer}{edgelayer}
		\draw (13.center) to (15.center);
		\draw (12.center) to (14.center);
		\draw (26.center) to (27.center);
		\draw (30.center) to (32);
		\draw (29.center) to (31);
		\draw [in=-90, out=-180, looseness=1.25] (33) to (34.center);
		\draw [in=-90, out=-180, looseness=1.50] (31) to (36.center);
		\draw (38.center) to (36.center);
		\draw (37.center) to (35);
		\draw (39.center) to (33);
		\draw [in=-90, out=0] (33) to (40.center);
		\draw (40.center) to (41.center);
		\draw [in=-90, out=90] (41.center) to (42.center);
		\draw [in=0, out=-90] (43.center) to (31);
	\end{pgfonlayer}
\end{tikzpicture}
 \]
\end{definition}

\noindent The category $\gauss$ has all conditionals. By picking a convenient complement to the fibre $D$, we can reduce the problem of conditioning in $\gaussex$ to conditioning in $\gauss$.

\begin{theorem}\label{thm:cond}
$\gaussex$ has conditionals.
\end{theorem}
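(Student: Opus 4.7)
The strategy is to decompose the locus of nondeterminism $D$ along the product structure of $X \otimes Y$ and reduce to the $\lin S$-conditional.

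Given $\Phi = (D, f, s) : A \to X \otimes Y$ in $\rel S$, first introduce the two natural subspaces $D_X := \pi_X[D] \subseteq X$ and $D_Y := D \cap (\{0\} \times Y) \subseteq Y$. These fit into a short exact sequence $0 \to D_Y \to D \xrightarrow{\pi_X} D_X \to 0$, which determines a well-defined \emph{slope} $h : D_X \to Y/D_Y$. Extending $h$ to a linear $\tilde h : X \to Y/D_Y$ is equivalent to choosing a splitting of the induced sequence $0 \to Y/D_Y \to (X \times Y)/D \to X/D_X \to 0$, and thus fixes an isomorphism $\Psi : (X \times Y)/D \xrightarrow{\sim} (X/D_X) \times (Y/D_Y)$. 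Under $\Psi$, the datum $(f, s)$ becomes a morphism $(f_1, f_2, s) : A \to (X/D_X) \otimes (Y/D_Y)$ in $\lin S$, to which the hypothesis applies: let $(g, t) : (X/D_X) \otimes A \to Y/D_Y$ be its conditional. Define the candidate conditional in $\rel S$ as
\[ \Phi_{|X} := (D_Y, \bar g, t) : X \otimes A \to Y, \qquad \bar g(x, a) := \tilde h(x) + g([x]_{D_X}, a). \]
The summand $\tilde h(x)$ is essential: it transmits the coupling between the uninformative parts of $X$ and $Y$ encoded by $D$, which would otherwise be destroyed by quotienting out $D_X$.

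To verify the conditional equation, compute the $X$-marginal $\Phi_X = (D_X, f_1, s_X) : A \to X$ and trace the composite
\[ (\id_X \otimes \Phi_{|X}) \circ (\mathsf{cpy}_X \otimes \id_A) \circ (\Phi_X \otimes \id_A) \circ \mathsf{cpy}_A : A \to X \otimes Y \]
through the composition rules of $\rel S$. The resulting locus of nondeterminism computes as $\{(d,\, \tilde h(d) + e) : d \in D_X,\, e \in D_Y\}$, which is exactly $D$ by construction of the slope $h$ and its extension $\tilde h$. On the decorated-linear-map side, applying $\Psi$ reduces the check to the $\lin S$-conditional equation for $(f_1, f_2, s)$, which holds by hypothesis.

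The main obstacle is the nondeterminism bookkeeping: the naive choice $\bar g(x, a) = g([x]_{D_X}, a)$ (without the $\tilde h$-correction) yields the strictly larger subspace $D_X \times D_Y$ rather than $D$, as seen already for $D = \{(x,x) : x \in \R\} \subseteq \R \otimes \R$, where the correct conditional of $Y$ given $X$ is the deterministic map $x \mapsto x$ and not the uniform distribution on $\R$. Once the slope is built into $\bar g$, the remaining verification reduces mechanically to the assumption on $\lin S$; independence of the construction from the splitting is then automatic because different choices produce $\approx$-equivalent representatives.
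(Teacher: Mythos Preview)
Your argument is correct and lands on the same conditional as the paper, namely a map of the form ``slope from $D$'' + ``$\lin S$-conditional'' with locus of nondeterminism $D_Y = D \cap (0 \times Y)$. The organisation differs: the paper chooses a complement $K$ of $D$ in $X \times Y$ with $K_X$ complementary to $D_X$ (Lemma~\ref{lemma:nicecomplement}), projects the representative $(f,\mu)$ onto $K$ to obtain a genuine $\lin S$-morphism $A \to X \times Y$, and then separately invokes conditionals in $\linrel^+$ to produce the slope $h$ and fibre $H$. You instead work intrinsically with quotients, building the isomorphism $\Psi : (X\times Y)/D \cong (X/D_X)\times(Y/D_Y)$ from a splitting of the short exact sequence and taking the $\lin S$-conditional on the quotient side. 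Your choice of extension $\tilde h$ corresponds exactly to the paper's choice of $K$ (the paper's $h \circ P_{D_X}$ is the particular extension that vanishes on $K_X$), and your $D_Y$ is the paper's $H$. The paper's packaging makes the modular structure ``conditionals in $\lin S$ + conditionals in $\linrel^+$'' explicit; yours avoids the auxiliary complement lemma and is a touch more coordinate-free.

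One small overstatement: your final remark that different splittings yield $\approx$-equivalent conditionals is not quite right. Two extensions $\tilde h_1,\tilde h_2$ of the slope differ by a map $X \to Y/D_Y$ vanishing on $D_X$, which is generally nonzero, so the resulting $\bar g_1,\bar g_2 : X \times A \to Y/D_Y$ are genuinely different morphisms in $\rel S$ (same locus $D_Y$, different linear parts). This is harmless because conditionals in a Markov category are not unique; each choice of splitting produces \emph{a} valid conditional, which is all that is needed.
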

\begin{proof}
In the appendix (\Cref{app:cond}).
\end{proof}

\subsection{Gaussian Relations}\label{sec:condeq}

One difficulty of conditioning is that it introduces the possibility of failure. For example, the condition $0 \eq 1$ is infeasible. In general, given a joint distribution $\psi \in \gauss(X \times Y)$, we can only condition $\psi|_{X=x}$ if $x$ lies in the support of the marginal $\psi_X$. The dependence on supports is carefully analyzed in the `Cond' construction of \cite{2021compositional}.

We define a hypergraph category $\gaussrel$ of \emph{Gaussian relations} as follows
\[ \gaussrel(X,Y) \defeq \gaussex(X \times Y) + \{ \bot \} \]
That is, a Gaussian relation is either a joint extended Gaussian distribution, or a special failure symbol $\bot$ which represents infeasibility. Failure is strict in all categorical operations, i.e. composing or tensoring anything with failure is again failure. 

Most of the categorical structure of $\gaussrel$ is easy to define.
\begin{enumerate}
\item any morphism $f \in \gaussex(X,Y)$ can be embedded into $\gaussrel$ as its \emph{name} $\lceil f \rceil$ given by $I \xrightarrow{u_X} X \xrightarrow{\mathsf{copy}_X} X \otimes X \xrightarrow{\id_X \otimes f} X \otimes Y$  
\item the identity is the diagonal relation $D = \{ (x,x) : x \in X \} \in \gaussex(X \times X)$
\item copying and comparison are the both given by the relation $\{ (x,x,x) : x \in X \}$
\end{enumerate}

Composition of Gaussian relations requires conditioning: Given $R \in \gaussrel(X,Y)$ and $S \in \gaussrel(Y,Z)$, we compose them as follows: If any of them is $\bot$, return $\bot$. Otherwise form the tensor $R \otimes S \in \gaussex(X \otimes Y \otimes Y \otimes Z)$, and condition the two copies of $Y$ to be equal. If that condition is infeasible, return $\bot$. 

\subsection{Decorated cospans as generalized statistical models}\label{sec:cospan_model}

We can get a clearer view of composition $\gaussrel$ by using decorated cospans. Recall that decorated copartial functions $X \to P \leftarrow Y$ corresponded to extended Gaussian maps $X \to Y$. If we allow arbitrary cospans, we know that the category $\gauss\mathsf{Cospan}(\vect)$ is a hypergraph category by construction. We now explain how to view such a cospan as a kind of generalized statistical model, whose `solution' is a Gaussian relation.

\begin{theorem}\label{thm:span_cond}
We have a functor of hypergraph categories
\[ F : \gauss\mathsf{Cospan}(\vect) \to \gaussrel \]
which sends the decorated cospan $X \xrightarrow{f} P \xleftarrow{g} Y$ with decoration $\psi \in \gauss(P)$ to the Gaussian relation described by the solution to the following inference problem: Initialize $x \sim X$ and $y \sim Y$ with an uninformative prior. Then condition $f(x) - g(x) \eq \psi$, and return the posterior distribution in $\gaussex(X \times Y)$, or $\bot$ if the condition was infeasible.
\end{theorem}

Decorated cospans thus have an interpretation as a generalized kind of statistical model, and Gaussian relations can be understood as equivalence classes of such cospans which have the same solution. This approach is systematically explored with the Cond construction of \cite{2021compositional}, and indeed we can see $\gaussrel$ as a concrete representation of $\cond(\gaussex)$. 

\hide{
Let $\catname{C}$ be a Markov category with conditionals. It is tempting to think of conditioning in $\catname{C}$ as a higher-order function
$$
|_X: \catname{C}(A, X \otimes Y) \to \catname{C}(A \otimes X, Y)
$$
that maps morphisms to their conditionals. However, in many categories of interest, conditionals are non-unique. This is a severe difficulty for the category theorist interested in Bayesian inference. Tobias Fritz says,

\begin{quote}
Besides the previous two lemmas, we have not been able to identify any further potentially useful categorical properties of conditioning. Our original goal in this work had been to axiomatize conditioning as an operation on a category satisfying certain naturality conditions. We have abandoned this due to the lack of further candidate axioms for conditioning which would hold e.g. in FinStoch, and the well-known problem that conditionals are generally non-unique in cases where zero probabilities occur (Proposition 11.15), leading us to suspect that a reasonably coherent choice of conditionals is not possible.
\end{quote}
Conditioning, though central to Bayesian statistics, is generally a poorly-behaved operation. The category $\gaussex$ is no exception: though it has conditionals, they are not unique, and there is no coherent way to choose one conditional over another. We can get around this problem by thinking about "observation" instead of "conditioning".

\begin{definition} We say that extended Gaussian distributions $(\psi_1, D_1)$ and $(\psi_2, D_2)$ on $X$ have \textit{disjoint supports} if there does not exist any $x \in X$ such that
$$
q_1(x) \in \supp(\psi_1) \text{ and } q_2(x) \in \supp(\psi_2),
$$
where $q_1: X \to X/D_1$ and $q_2: X \to X/D_2$ are the canonical quotient maps.
\end{definition}
In particular, if $(\psi_2, D_2) = (\delta_x, \{0\})$ for some $x \in X$, then $(\psi_1, D_1)$ and $(\psi_2, D_2)$ have disjoint supports if and only if
$$
q_1(x) \notin \supp(\psi_1).
$$

\begin{definition}
For any vector spaces $X$ and $Y$, we define the function
$$
\mathrm{obs}: \gaussex(I, X \otimes Y) \times \gaussex(I, X) \to \gaussex(I, Y) \cup \{\bot\}
$$
for all $p: I \to X \otimes Y$ and $x: I \to X$ by
$$
\mathrm{obs}(p, x) = \bot
$$
if $x$ and $(\mathrm{id}_X \otimes \mathrm{del}_Y) \circ p$ have disjoint supports, and
$$
\mathrm{obs}(p, x) = p|_X \circ x
$$
otherwise.
\end{definition}
If we think of $p: I \to X \otimes Y$ as being a Bayesian model with parameter space $Y$ and sample space $X$, then the partial function
$$
\mathrm{obs}(p, -): \gaussex(I, X) \to \gaussex(I, Y)
$$
updates the model in response to data. If $p$ and $x$ do not have disjoint supports, then $\mathrm{obs}(p, x)$ is a posterior distribution over $Y$. Otherwise, no well-defined posterior distribution exists, and $\mathrm{obs}(p, x) = \bot$.

\subsection{Gaussian Relations and Decorated Cospans}

A Gaussian relation from $m$ to $n$, written $f: m \to n$, is an element of the set 
$$
\gaussex(I, \mathbb{R}^{m + n}) \cup \{\bot\}.
$$
The monoidal product of Gaussian relations $f_1: m_1 \to n_1$ and $f_2: m_2 \to n_2$ is the Gaussian relation
$$
f_1 \oplus f_2: m_1 + m_2 \to n_1 + n_2
$$
given by
$$
f_1 \oplus f_2 = \bot
$$
if $f_1 = \bot$ or $f_2 = \bot$, and
$$
f_1 \oplus f_2 = (\mathrm{id}_{m_1} \otimes \sigma_{n_1, m_2} \otimes \mathrm{id}_{n_2}) \circ (f_1 \otimes f_2)
$$
otherwise. The sequential composite of Gaussian relations $f: l \to m$ and $g: m \to n$ is the Gaussian relation
$$
g \circ f: l \to n
$$
given by
$$
g \circ f = \bot
$$
if $f = \bot$ or $g = \bot$, and
$$
g \circ f = \mathrm{obs}\left(A_*(f \otimes g), \delta_0\right)
$$
otherwise, where $A: \mathbb{R}^{l + 2m + n} \to \mathbb{R}^{l + m + n}$ is the linear transformation
$$
A(w, x, y, z) = (x - y, w, z).
$$
\begin{proposition}
The above data defines a PROP $\mathrm{GaussRel}$ of Gaussian relations.
\end{proposition}
Sequential composition in $\mathrm{GaussRel}$, as defined above, is a little abstract. To make it more concrete, we can represent morphisms in $\mathrm{GaussRel}$ as $F$-decorated cospans, where
$$
F: \mathrm{Mat} \hookrightarrow \mathrm{Gauss}
$$
is the PROP functor embedding $\mathrm{Mat}$, the category of real matrices, into $\mathrm{Gauss}$.

\begin{proposition}
    There exist PROP functors
    \begin{equation}
    	\begin{tikzcd}
            \mathrm{LinRel} && \mathrm{GaussRel} \\
            \\
            \mathrm{Cospan}(\mathrm{Mat}) && \mathrm{FCospan}(\mathrm{Mat})
            \arrow[from=3-1, to=1-1]
            \arrow[from=3-3, to=1-3]
            \arrow[from=1-1, to=1-3]
            \arrow[from=3-1, to=3-3]
    	\end{tikzcd}
    \end{equation}
\end{proposition}
The functor $\mathrm{FCospan(Mat)} \to \mathrm{GaussRel}$
maps decorated cospans 
$$
(L: m \to k, R: n \to k, \phi: I \to k)
$$
to Gaussian relations
$$
\mathrm{obs}((\psi, D), \delta_0),
$$
where
$$
\psi = \phi \otimes \delta_{0},
$$
and
$$
D = \{ (Ry - Lx, x, y) \mid (x, y) \in \mathbb{R}^{m + n} \}.
$$
The existence of such a functor proves that $\mathrm{GaussRel}$ is a hypergraph category.

\begin{proposition}
    $\text{GaussEx}$ is the causal subcategory of $\text{GaussRel}$.
\end{proposition}
}

\section{Related Work and Applications}\label{sec:connections}

\subsection{Open Linear Systems and $\sigma$-algebras}\label{sec:open_system}
Recall that a probability space is a tuple $(X,\mathcal E,P)$ of a set $X$, a $\sigma$-algebra $\mathcal E$ and a probability measure $P : \mathcal E \to [0,1]$. A random variable is a function $V : X \to \R$ which is $(\mathcal E, \mathcal B(\R))$-measurable, where $\mathcal B(\R)$ denotes the Borel $\sigma$-algebra.

Willems defines an \emph{$n$-dimensional linear stochastic system} to be a probability space of the form $(\R^n,\mathcal E,P)$ for which there exists a `fibre' subspace $D \subseteq \R^n$ such that the $\sigma$-algebra $\mathcal E$ is given by the Borel subsets of $\R^n/D$ in the following sense: Pick any complementary subspace $K$ with $K \oplus D = \R^n$. Then, the events $V \in \mathcal E$ are precisely Borel cylinders parallel to $D$, i.e. of the form $V = A + D$ for $A \in \mathcal B(K)$. As an aside, we might wonder in which sense the the algebra $\mathcal E$ is a quotient construction. The measurable projection $p : (\R^n,\mathcal E) \to (K,\mathcal B(K))$ is \emph{not} an isomorphism of measurable spaces; after all, the underlying function is not invertible. It is however an isomorphism in the category of probability kernels, namely the inclusion $i : K \to \R^n$ is an inverse when considered as a stochastic map. This is because the Dirac measures $\delta_{x}$ and $\delta_{ipx}$ are equal on $\mathcal E$. This phenomenon of `weak quotients' is nicely explained in \cite[Appendix~A]{moss_perrone_2023}. \\

A linear system is called \emph{Gaussian} if the measure $P$ on $K$ is a normal distribution. We notice that this agrees precisely with our definition of an extended Gaussian distribution on $\R^n$ with fibre $D$. A linear system is classical only if $D=0$, in the sense that only in this case the measure $P$ is defined on the whole algebra $\mathcal B(\R)$. In the case $D=\R^n$, the $\sigma$-algebra becomes $\mathcal E=\{\emptyset,\R^n\}$ and we cannot answer any nontrivial questions about the system (\Cref{ex:r_inv}). 

Willems gives explicit formulas for combining Gaussian linear systems (`tearing, zooming and linking') \cite{willems:oss}. These operations have been treated in categorical form in \cite{fong:open_systems} but not for probabilistic systems. One fundamental operation in Willems' calculus is the \emph{interconnection} of systems: Two probability systems $(X,\mathcal E_1,P_1), (X,\mathcal E_2,P_2)$ on the same state space $X$ are called \emph{complementary} if for all $E_1,E_1' \in \mathcal E_1$ and $E_2,E_2' \in \mathcal E_2$, we have 
\[ E_1 \cap E_2 = E_1' \cap E_2' \Rightarrow P_1(E_1)P_2(E_2) = P_1(E_1')P_2(E_2') \]
That is, the product of probabilities $P_1(E_1)P(E_2)$ depends only on the intersection $E_1 \cap E_2$. The two probability measures $P_1, P_2$ can now be joined together on the larger $\sigma$-algebra $\mathcal E = \sigma(\mathcal E_1 \cup \mathcal E_2)$ by defining $P(E_1 \cap E_2) = P(E_1)P(E_2)$. This is what happens in \Cref{ex:resistor} when connecting a noisy resistor to a voltage source: The underspecified $\sigma$-algebras gets enlarged and nondeterministic relationships become become probabilistic ones. It seems to us that interconnection is a special case of the composition of Gaussian relations. 

It is furthermore interesting that Willems uses the term \emph{open} for probability systems with an underspecified $\sigma$-algebra on $X$, while in category theory, we think of open systems as morphisms $Y \to X$. A remarkable feature is that the $\sigma$-algebra, which in measure-theoretic probability is considered a property of the objects in question (i.e. measurable spaces), is here part of the morphisms. The cospan perspective unifies this, for in a cospan $Y \xrightarrow{f} Q \xleftarrow{q} X$, we can equip $X$ with the $\sigma$-algebra generated by the quotient map $q$.

\subsection{Variance-Precision Duality}\label{sec:variance_precision}
We recall the coordinate-free description of Gaussian probability and use it to show that extended Gaussians are highly symmetric objects, which enjoy an improved duality theory over ordinary Gaussians (reflecting the hypergraph structure of Gaussian relations). This also points towards future research to understand $\gaussex$ as a topological completion of ordinary Gaussians. Work in this direction is the variance-information manifold of \cite{james:variance_manifold}. For simplicity, we will consider only Gaussians of mean zero.

If the covariance matrix $\Sigma \in \R^{n \times n}$ is invertible, then its inverse $\Omega = \Sigma^{-1}$ is known as \emph{precision} or \emph{information matrix}. Precision is dual to covariance, in the sense that while covariance is additive for convolution $+$, precision is additive for conditioning.
\[ \Sigma_{\psi_1 + \psi_2} = \Sigma_{\psi_1} + \Sigma_{\psi_2}, \qquad \Omega_{\psi_1 \cap \psi_2} = \Omega_{\psi_1} + \Omega_{\psi_2} \]
The latter equation is reminiscent of logdensities, which add when conditioning. Indeed, the precision matrix appears in the density function of the multivariate Gaussian distribution $f(x) \propto \exp\left(-\frac 1 2(x-\mu)^T\Omega(x-\mu)\right)$. 

If we allow singular covariance matrices $\Sigma$, we still have well-defined Gaussian distributions albeit with non-full support; however the information matrix ceases to exist (and the distribution no longer has a density with respect to the $n$-dimensional Lebesgue measure). Not only does this break the duality, but we are left to wonder which kind of distribution corresponds to singular precision matrices: The answer is extended Gaussian distributions with nonvanishing fibre.    

In a coordinate-free way, the covariance of a distribution $\psi \in \gauss(X)$ is the bilinear form on the dual space $\Sigma : X^* \times X^* \to \R$ given by $\Sigma(f,g) \defeq \mathbb E[f(U)g(U)]-\mathbb E[f(U)]\mathbb E[g(U)]$. This form is symmetric and positive semidefinite. The precision form $\Omega$ is instead of type $\Omega : X \times X \to \R$. The duality between the two forms can be stated as follows:

\begin{theorem}\label{exduality}
	The following data are equivalent for every f.d.-vector space $X$
	\begin{enumerate}
		\item pairs $\langle S,\Omega \rangle$ of a subspace $S \subseteq X$ and a bilinear form $\Omega : S \times S \to \R$
		\item pairs $\langle F, \Sigma \rangle$ of a subspace $F \subseteq X^*$ and a bilinear form $\Sigma : F \times F \to \R$
	\end{enumerate}
\end{theorem}
At the core of this duality lies the notion of the \emph{annihilator} of a subspace, here denoted $(-)^\bot$. In brief, the correspondences are as follows
\begin{center}
\begin{tabular}{ c|cc } 
precision & $S = \ker(\Sigma)^\bot$ & $\ker(\Omega)=D$ \\
\hline
covariance & $F=D^\bot$ & $\ker(\Sigma) = S^\bot$
\end{tabular}
\end{center}
We give a proof of the duality in the appendix (\Cref{app:la}).

\subsection{Statistical Learning and Probabilistic Programming}

It is unsurprising that notions equivalent to extended Gaussians have appeared in the statistics (e.g. in \cite{hedegaard:gaussian_random_fields}). A novel perspective on statistical inference which more closely matches the categorical semantics is \emph{probabilistic programming}, a powerful and flexible paradigm which has gained traction in recent years (e.g. \cite{van_de_meent:ppl_introduction,staton:commutative_semantics,probmods2}). In \cite{2021compositional}, we argued that the exact conditioning operation (conditioning on equality) described in \Cref{sec:condeq} is a fundamental primitive in such programs, and enjoys good logical properties. We presented a programming language for Gaussian probability featuring a first-class exact conditioning operator $(\eq)$, with Python/F\# implementations available under \cite{gaussianinfer}. For example, the noisy measurement example expressed as a probabilistic program reads

\lstset{style=python_ppl}
\begin{lstlisting}
x = normal(50, 100)
y = normal(x, 25)
y =:= 40
return x
\end{lstlisting}

This language uses Gaussian distributions only, but it can effortlessly be extended to use extended Gaussian distributions, which are likewise closed under conditioning (\Cref{thm:cond}).

The behavior of the conditioning operator $(\eq)$ can be quite subtle, and it is difficult to decide when two programs are observationally equivalent. The denotational semantics defined in \cite{2021compositional} on the basis of the category $\cond(\gauss)$ is fully abstract, but it is still lacking a concrete description of when two different programs fragments have the same behavior in all contexts. This is remedied by passing to the concrete description of $\gaussrel$. In terms of \Cref{sec:cospan_model}, a program denotes a decorated cospan over $\vect$, and contextual equivalence is precisely the equivalence relation \Cref{thm:span_cond}. 

The correspondence between probabilistic programs and categorical models of probability (with conditioning) is elaborated in detail in \cite{dario_thesis}.

\bibliographystyle{acm}
\bibliography{main}

\section{Appendix}\label{sec:appendix}

\subsection{Glossary: Category Theory}\label{app:categories}
We assume basic familiarity of the reader with monoidal category theory and string diagrams. All relevant categories in this article are symmetric monoidal. 

A copy-delete category \cite{cho_jacobs} (or gs-monoidal category) is a symmetric monoidal category $(\C,\otimes,I)$ where every object $X$ is coherently equipped with the structure of a commutative comonoid, which is used to model copying ($\mathsf{copy}_X : X \to X \otimes X$) and discarding ($\mathsf{del}_X : X \to I$) of information. In string diagrams, the comonoid axioms are rendered as

\[ \begin{tikzpicture}[scale=0.3]
	\begin{pgfonlayer}{nodelayer}
		\node [style=none] (4) at (-3.25, -2) {};
		\node [style=none] (5) at (2.25, 0) {};
		\node [style=bn] (8) at (-3.25, -1) {};
		\node [style=none] (11) at (4.25, 2) {};
		\node [style=bn] (14) at (-2.25, 1) {};
		\node [style=none] (17) at (-4.25, 2) {};
		\node [style=none] (18) at (3.25, 2) {};
		\node [style=none] (20) at (-4.25, 0) {};
		\node [style=none] (22) at (2.25, 0) {};
		\node [style=none] (23) at (-3.25, 2) {};
		\node [style=none] (24) at (3.25, -2) {};
		\node [style=none] (28) at (-2.25, 0) {};
		\node [style=none] (30) at (4.25, 0) {};
		\node [style=none] (31) at (0, 0) {$=$};
		\node [style=bn] (32) at (2.25, 1) {};
		\node [style=none] (33) at (-2.25, 0) {};
		\node [style=bn] (34) at (3.25, -1) {};
		\node [style=none] (35) at (1.25, 2) {};
		\node [style=none] (36) at (-1.25, 2) {};
		\node [style=none] (37) at (11.25, 1) {};
		\node [style=none] (38) at (11.25, 2) {};
		\node [style=none] (39) at (9.25, 1) {};
		\node [style=none] (40) at (15, 0.5) {$=$};
		\node [style=none] (41) at (18.25, 1) {};
		\node [style=none] (42) at (10.25, -1) {};
		\node [style=bn] (43) at (9.25, 1) {};
		\node [style=bn] (44) at (17.25, 0) {};
		\node [style=none] (45) at (16.25, 1) {};
		\node [style=bn] (46) at (10.25, 0) {};
		\node [style=none] (47) at (17.25, -1) {};
		\node [style=none] (48) at (9.25, 1) {};
		\node [style=none] (49) at (18.25, 1) {};
		\node [style=none] (50) at (16.25, 2) {};
		\node [style=none] (51) at (13.75, 2) {};
		\node [style=bn] (52) at (18.25, 1) {};
		\node [style=none] (53) at (12.5, 0.5) {$=$};
		\node [style=none] (54) at (13.75, -1) {};
		\node [style=bn] (55) at (23.25, -0.25) {};
		\node [style=none] (56) at (22.25, 0.75) {};
		\node [style=none] (57) at (24.25, 0.75) {};
		\node [style=bn] (58) at (27.75, -0.25) {};
		\node [style=none] (59) at (26.75, 0.75) {};
		\node [style=none] (60) at (28.75, 0.75) {};
		\node [style=none] (61) at (25.5, 0) {$=$};
		\node [style=none] (62) at (23.25, -1.75) {};
		\node [style=none] (63) at (27.75, -1.75) {};
		\node [style=none] (64) at (22.25, 1.75) {};
		\node [style=none] (65) at (24.25, 1.75) {};
		\node [style=none] (66) at (26.75, 1.75) {};
		\node [style=none] (67) at (28.75, 1.75) {};
	\end{pgfonlayer}
	\begin{pgfonlayer}{edgelayer}
		\draw [style=none] (4.center) to (8);
		\draw [style=none, bend left=45] (8) to (20.center);
		\draw [style=none, bend right=45] (8) to (28.center);
		\draw [style=none] (33.center) to (14);
		\draw [style=none, bend left=45] (14) to (23.center);
		\draw [style=none, bend right=45] (14) to (36.center);
		\draw [style=none] (20.center) to (17.center);
		\draw [style=none] (24.center) to (34);
		\draw [style=none, bend left=45] (34) to (5.center);
		\draw [style=none, bend right=45] (34) to (30.center);
		\draw [style=none] (22.center) to (32);
		\draw [style=none, bend left=45] (32) to (35.center);
		\draw [style=none, bend right=45] (32) to (18.center);
		\draw [style=none] (30.center) to (11.center);
		\draw [style=none] (42.center) to (46);
		\draw [style=none, bend left=45] (46) to (48.center);
		\draw [style=none, bend right=45] (46) to (37.center);
		\draw [style=none] (43) to (39.center);
		\draw [style=none] (37.center) to (38.center);
		\draw [style=none] (47.center) to (44);
		\draw [style=none, bend left=45] (44) to (45.center);
		\draw [style=none, bend right=45] (44) to (49.center);
		\draw [style=none] (45.center) to (50.center);
		\draw [style=none] (52) to (41.center);
		\draw [style=none] (54.center) to (51.center);
		\draw [style=none, bend left=45] (55) to (56.center);
		\draw [style=none, bend right=45] (55) to (57.center);
		\draw [style=none, bend left=45] (58) to (59.center);
		\draw [style=none, bend right=45] (58) to (60.center);
		\draw (63.center) to (58);
		\draw (62.center) to (55);
		\draw (67.center) to (60.center);
		\draw (66.center) to (59.center);
		\draw [in=90, out=-90, looseness=0.75] (64.center) to (57.center);
		\draw [in=-90, out=90, looseness=0.75] (56.center) to (65.center);
	\end{pgfonlayer}
\end{tikzpicture}
 \]

Neither deleting nor copying are assumed to be natural in a copy-delete category. A \emph{Markov category} is a copy-delete category where deleting is natural, or equivalently, the monoidal unit $I$ is terminal. Markov categories typically model probabilistic or nondeterministic computation without possibility of failure, such as stochastic matrices, $\gauss$ or total (linear) relations. 

Copy-delete categories can model unnormalized probabilistic computation, or the potential of failure. The categories of partial functions or (linear) relations are typical examples of copy-delete categories that are not Markov categories. 

A \emph{hypergraph category} \cite{hypergraphcats} is a symmetric monoidal category with a particularly powerful self-duality: Every object is equipped with a special commutative Frobenius algebra structure.

\subsection{Noisy measurement example}\label{sec:mmt}

\begin{example}
	We elaborate the noisy measurement example from the introduction. Formally, we introduce random variables
	\begin{align*}
		X &\sim \N(50, 100) \\
		Y &\sim \N(X, 25)
	\end{align*}
	The vector $(X,Y)$ is multivariate Gaussian with mean $(50,50)$ and covariance matrix
	\[ \Sigma = \begin{pmatrix}
		100 & 100 \\
		100 & 125
	\end{pmatrix} \]
	The conditional distribution $X|(Y=40)$ is $\N(\mu=42,\sigma^2 = 20)$.
\end{example}
\begin{proof}
	The random vector $(X,Y)$ has joint density function
	\[ f(x,y) = \frac{1}{2\pi \cdot \sqrt{100 \cdot 25}} \exp\left(-\frac{(x-50)^2}{2 \cdot 100}\right) \cdot \exp\left(-\frac{(y-x)^2}{2 \cdot 25}\right) \]
	The conditional density of $x$ given $y$ has the form
	\[ f(x|y) = \frac{f(x,y)}{\int f(x,y) \mathrm d x} \]
	By expanding and `completing the square', it is easy to check that
	\[ f(x, 40) \propto \exp\left(-\frac{(x-50)^2}{200} -\frac{(40-x)^2}{50}\right) \propto \exp\left(-\frac{(x-42)^2}{2\cdot 20}\right) \]
	is again a Gaussian density, from which we read off $\mu = 42$ and $\sigma^2 = 20$. 
\end{proof}

\subsection{Glossary: Linear Algebra}\label{app:la}

All vector spaces in this paper are assumed finite dimensional. For vector subspaces $U,V \subseteq X$, their \emph{Minkowski sum} is the subspace $U+V = \{ u + v : u \in U, v \in V \}$. If furthermore $U \cap V = 0$, we call their sum a \emph{direct sum} and write $U \oplus V$. A \emph{complement} of $U$ is a subspace $V$ such that $U\oplus V = X$. An \emph{affine subspace} $W \subseteq X$ is a subset of the form $x + U$ for some $x \in X$ and a (unique) vector subspace $U \subseteq X$. The space $W$ is called a \emph{coset} of $U$ and the cosets of $U$ organize into the quotient vector space $X/U = \{ x + U : x \in X \}$. \\

An affine-linear map $f : X \to Y$ between vector spaces is a map of the form $f(x) = g(x) + y$ for some linear function $g : X \to Y$ and $y \in Y$. Vector spaces and affine-linear maps form a category $\aff$. \\

A \emph{linear relation} $R \subseteq X \times Y$ is a relation which is also a vector subspace of $X \times Y$. We write $R(x) \defeq \{ y \in Y : (x,y) \in R \}$. A relation $R \subseteq X \times Y$ is called \emph{total} if $R(x) \neq \emptyset$ for all $x \in X$. 
Linear relations and total linear relations are closed under the usual composition of relations. We denote by $\linrel$ and $\linrel^+$ the categories whose objects are vector spaces, and morphisms are linear relations and total linear relations respectively. $\linrel$ is a hypergraph category, while $\linrel^+$ is a Markov category. \\

The following lemma is crucial for relating linear relations and cospans: Every left-total linear relation can be written as a `linear map with nondeterministic noise' $x \mapsto f(x) + D$.

\begin{proposition}\label{prop:linrel_characterization}
	Let $R \subseteq X \times Y$ be a left-total linear relation. Then
	\begin{enumerate}
		\item\label{it:r0} $R(0)$ is a vector subspace of $Y$
		\item\label{it:coset} $R(x)$ is a coset of $R(0)$ for every $x \in X$
		\item\label{it:lin} the assignment $x \mapsto R(x)$ is a well-defined linear map $X \to Y/R(0)$
		\item\label{it:bij} every linear map $X \to Y/D$ is of that form for a unique left-total linear relation
	\end{enumerate}
\end{proposition}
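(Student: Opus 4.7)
The plan is to dispatch the four clauses essentially in the order stated, since each builds on the previous one, and the entire proposition is really an unpacking of the fact that a linear relation is a subspace plus the additional hypothesis of left-totality. Nothing here is deep; the only judgement call is how much of the argument to carry out explicitly versus how much to leave as a routine check.

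For clause \ref{it:r0}, the subset $R(0)$ is the intersection of $R$ with the subspace $\{0\} \times Y$, both of which are subspaces of $X \times Y$, so it is a subspace of $\{0\} \times Y$ and hence (under the obvious identification) of $Y$. For clause \ref{it:coset}, pick any $y_0 \in R(x)$ using left-totality. Then for every $z \in R(0)$ we have $(x,y_0) + (0,z) \in R$, so $y_0 + z \in R(x)$; and conversely any $y \in R(x)$ gives $(0, y - y_0) = (x,y) - (x,y_0) \in R$, showing $y - y_0 \in R(0)$. Thus $R(x) = y_0 + R(0)$.

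Clause \ref{it:lin} now follows by reading clause \ref{it:coset} as saying $R$ induces a well-defined set map $\bar R : X \to Y/R(0)$, $x \mapsto y_0 + R(0)$ for any $y_0 \in R(x)$. Linearity is routine: picking witnesses $y_i \in R(x_i)$ gives $\lambda y_1 + \mu y_2 \in R(\lambda x_1 + \mu x_2)$ by linearity of $R$, so $\bar R(\lambda x_1 + \mu x_2) = \lambda \bar R(x_1) + \mu \bar R(x_2)$ in the quotient.

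The only clause with any content is \ref{it:bij}. Given a linear $f : X \to Y/D$, define
\[ R_f \;\defeq\; \{\,(x,y) \in X \times Y : y + D = f(x)\,\}. \]
Linearity of $f$ makes $R_f$ a linear subspace of $X \times Y$; surjectivity of $Y \to Y/D$ makes $R_f$ left-total; and a direct computation gives $R_f(0) = D$ and $\bar{R_f} = f$. For uniqueness, if $R, R'$ are left-total linear relations with $R(0) = R'(0) = D$ and $\bar R = \bar{R'} = f$, then for each $x$ the cosets $R(x)$ and $R'(x)$ of $D$ coincide as elements of $Y/D$ and hence as subsets of $Y$, so $R = R'$. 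The only mild subtlety — and the place I would be most careful — is verifying $R_f(0) = D$: this uses $f(0) = 0 + D = D$ together with the definition, so an element $y \in R_f(0)$ is exactly one with $y + D = D$, i.e.\ $y \in D$. Everything else is formal manipulation of cosets.
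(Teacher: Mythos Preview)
Your proof is correct and follows essentially the same approach as the paper: both arguments verify the subspace, coset, and linearity claims directly from the fact that $R$ is a linear subspace, and both construct the inverse in clause~\ref{it:bij} as $R_f = \{(x,y) : y \in f(x)\}$. Your treatment is in fact slightly more complete than the paper's, since you explicitly verify $R_f(0) = D$ and address uniqueness, whereas the paper leaves these implicit; the only cosmetic difference is your intersection-of-subspaces argument for clause~\ref{it:r0} in place of the paper's direct closure check.
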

\begin{proof}
	For \ref{it:r0}, consider $y,y' \in R(0)$ (by assumption nonempty), then by linearity of $R$
	\[ (0,y) \in R, (0,y') \in R \Rightarrow (0,\alpha y + \beta y') \in R \]
	so $R(0)$ is a vector subspace. For \ref{it:coset}, we can find some $w \in R(x)$ and wish to show that $R(x) = w + R(0)$. Indeed if $y \in R(x)$ then $(x,y) - (x,w) = (0,y-w) \in R$ so $y-w \in R(0)$, hence $y \in w+R(0)$. Conversely for all $z \in R(0)$ we have $(x,w+z) = (x,w)+(0,z) \in R$ so $w+z \in R(x)$. This completes the proof that $R(x)$ is a coset. For \ref{it:lin}, the previous point shows that the map $\rho : x \mapsto R(x)$ is a well-defined map $X \to Y/R(0)$. It remains to show it is linear. That is, if $w \in R(x)$ and $z \in R(y)$ then $\alpha w + \beta z \in R(\alpha x + \beta y)$. This follows immediately from the linearity of $R$. For the last point \ref{it:bij}, given a linear map $f : X \to Y/V$ we construct the relation
	\[ (x,y) \in R \Leftrightarrow y \in f(x) \]
	which is left-total because $f(x) \neq \emptyset$. To see that $R$ is linear, let $(x,y) \in R, (x',y') \in R$ meaning $y - z \in V$ and $y' - z \in V$ for representatives $z,z'$ of $f(x),f(x')$. Linearity of $f$ means that $\alpha z + \beta z'$ is a representative of $f(\alpha x + \beta x')$. Thus
	\[ \alpha y + \beta y' - (\alpha z + \beta z') = \alpha(y-z) + \beta (y'-z') \in V \]
\end{proof}

\subsection{Annihilators}

For subspaces $D \subseteq X$ and $F \subseteq X^*$, the subspaces $D^\bot \subseteq X^*, F^\bot \subseteq X$ are defined as
\begin{align}
	D^\bot \defeq \{ f \in X^* : f|_D = 0 \}, \qquad
	F^\bot \defeq \{ x \in X : \forall f \in F, f(x) = 0 \}
\end{align}

\begin{proposition}\label{prop:annih}
	\begin{enumerate}
		\item Taking annihilators is order-reversing and involutive
		\item If $D \subseteq S \subseteq X$, then $S^\bot \subseteq D^\bot \subseteq X^*$ and we have a canonical isomorphism
		\begin{equation}
			(S/D)^* \cong D^\bot/S^\bot \label{iso1}
		\end{equation}
		and similarly for $K \subseteq F \subseteq X^*$, we have
		\begin{equation} (F/K)^* \cong K^\bot/F^\bot \label{iso2} \end{equation}
		\item We have
		\[ (V + W)^\bot = V^\bot \cap W^\bot \]
		\[ (F \cap W)^\bot = F^\bot + G^\bot \]
		If $D \subseteq X$ and $f : X \to Y$, then 
		\[ (f[D])^\bot = \{ g \in Y^* : gf \in D^\bot \} \]
		If $U \subseteq X, V \subseteq Y$, we have a canonical isomorphism
		\[ (U \times V)^\bot \cong U^\bot \times V^\bot \]
	\end{enumerate}		
\end{proposition}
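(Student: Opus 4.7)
The plan is to dispatch the three parts in order, with each reducing to standard finite-dimensional linear algebra once the right map is identified. The ubiquitous tool is the dimension identity $\dim U + \dim U^\bot = \dim X$, an immediate consequence of the evaluation pairing $X^* \times X \to \R$ being nondegenerate in finite dimensions. Part (1) follows at once: order-reversal holds because any $f \in V^\bot$ kills $V$ and hence every subspace of $V$; involutivity follows from the tautological inclusion $U \subseteq (U^\bot)^\bot$ combined with a dimension count using this identity.

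For part (2), I focus on \eqref{iso1}, the case \eqref{iso2} being obtained by replacing $X$ with $X^*$ and invoking involutivity. Consider the restriction map $\rho : D^\bot \to S^*$. Since $f \in D^\bot$ vanishes on $D$, its restriction to $S$ factors through the quotient $S \twoheadrightarrow S/D$, yielding $\overline{\rho} : D^\bot \to (S/D)^*$. Its kernel consists of those $f \in D^\bot$ with $f|_S = 0$, which is precisely $S^\bot$. Surjectivity is established by extending any $\phi \in (S/D)^*$ first to $\tilde\phi \in S^*$ via the quotient, and then to an element of $X^*$ by picking any complement of $S$ in $X$ and setting the extension to zero there; the resulting functional automatically lies in $D^\bot$. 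Two such extensions differ by an element of $S^\bot$, so the induced isomorphism $D^\bot/S^\bot \cong (S/D)^*$ does not depend on the choice, and naturality in the pair $D \subseteq S$ is straightforward.

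Part (3) is largely formal. The identity $(V+W)^\bot = V^\bot \cap W^\bot$ is immediate from the definitions: a functional kills every element of $V+W$ iff it kills both $V$ and $W$ separately. The dual identity $(F \cap G)^\bot = F^\bot + G^\bot$ follows by applying the first identity to $V = F^\bot, W = G^\bot$ to get $(F^\bot + G^\bot)^\bot = F^{\bot\bot} \cap G^{\bot\bot} = F \cap G$ via involutivity, and then taking annihilators once more. The formula $(f[D])^\bot = \{g \in Y^* : gf \in D^\bot\}$ unfolds directly: $g \in (f[D])^\bot$ iff $g(f(d)) = 0$ for every $d \in D$ iff $gf$ vanishes on $D$. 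Finally, $(U \times V)^\bot \cong U^\bot \times V^\bot$ uses the canonical identification $(X \times Y)^* \cong X^* \times Y^*$, under which $(f,g)$ annihilates $U \times V$ iff $f|_U = 0$ and $g|_V = 0$.

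No serious obstacles are anticipated; the only mildly delicate step is the surjectivity argument in part (2), which requires a non-canonical choice of complement to extend functionals from $S$ to $X$. The resulting ambiguity is exactly $S^\bot$, hence invisible on the quotient $D^\bot/S^\bot$, which is why the isomorphism is nonetheless canonical.
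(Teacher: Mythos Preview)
Your proposal is correct and matches the paper's approach essentially verbatim: the paper declares parts (1) and (3) ``standard'' and, for part (2), gives precisely your restriction-then-descend map $D^\bot \to (S/D)^*$ with kernel $S^\bot$, together with the inverse obtained by extending a functional from $S$ to $X$ via a complement and observing that the ambiguity lies in $S^\bot$. There is nothing to add.
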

\begin{proof}
	Standard. An explicit description of the canonical iso \eqref{iso1} is given as follows.
	\begin{enumerate}
		\item We define $\alpha : D^\bot/S^\bot \to (S/D)^*$ as follows. If $f \in D^\bot$, then $f$ is a function $X \to \R$ such that $f|_D = 0$. The restriction $f|_S : S \to \R$ thus descends to the quotient $S/D \to \R$, and we let $\widetilde \alpha(f) = f|_S$. To check this is well-defined, notice that the kernel of $\widetilde \alpha$ consists of those $f \in X^*$ such that $f|_S=0$, that is $S^\bot$.
		\item We define $\alpha^{-1} : (S/D)^* \to D^\bot/S^\bot$ as follows. An element $f \in (S/D)^*$ is a function $f : S \to \R$ with $S|_D = 0$. Find any extension of $f$ to a linear function $\bar f : X \to \R$ (such an extension exists because $S$ is a retract of $X$). Then still $\bar f|_D = 0$, so $\bar f \in D^\bot$. It remains to show that the choice of extension does not matter in the quotient $D^\bot/S^\bot$. Indeed if $\bar f_2$ is another extension, then $(\bar f - \bar f_2)|_S = f - f = 0$, hence $(\bar f - \bar f_2) \in S^\bot$.
	\end{enumerate}
\end{proof}

\subsection{Conditionals}\label{app:cond}

The existence proof of conditionals in $\gaussex$ relies on the ability to pick a convenient complement to a subspace, as constructed by the following lemma:
\begin{lemma}\label{lemma:nicecomplement}
	Let $V \subseteq X \times Y$ be a vector subspace, and let $V_X \subseteq X$ be its projection. Then there exists a complement $K \subseteq X \times Y$ of $V$ such that $K_X$ is a complement of $V_X$.
\end{lemma}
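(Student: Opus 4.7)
The plan is to construct $K$ as a product of suitable complements, one in $X$ and one in $Y$, using the subspace of ``vertical'' vectors in $V$. Write $\pi_X, \pi_Y : X \times Y \to X, Y$ for the projections, and let $L \subseteq Y$ be the subspace with $\{0\} \times L = V \cap (\{0\} \times Y)$, i.e.\ the kernel of the restriction $\pi_X|_V : V \to V_X$. By the rank--nullity theorem applied to this restriction we have the dimension identity $\dim V = \dim V_X + \dim L$, which will be what makes the dimension count work out at the end.

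First I would pick a complement $C \subseteq X$ of $V_X$, so that $V_X \oplus C = X$, and a complement $M \subseteq Y$ of $L$, so that $L \oplus M = Y$. Then define $K \defeq C \times M \subseteq X \times Y$. By construction $\pi_X[K] = C$, which is the desired complement of $V_X$, so the ``furthermore'' condition is automatic once $K$ is shown to be a complement of $V$.

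The main step is to check that $V \cap K = 0$. Take $(c,m) \in V \cap K$ with $c \in C$ and $m \in M$. Projecting to $X$ gives $c \in V_X \cap C$, so $c = 0$. But then $(0,m) \in V \cap (\{0\} \times Y)$, which means $m \in L$; combined with $m \in M$ and $L \cap M = 0$ we conclude $m = 0$. Hence $V \cap K = 0$.

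Finally a dimension count closes the argument: using $\dim V = \dim V_X + \dim L$ together with $\dim C = \dim X - \dim V_X$ and $\dim M = \dim Y - \dim L$, we get $\dim V + \dim K = \dim X + \dim Y = \dim(X \times Y)$. Combined with $V \cap K = 0$ this gives $V \oplus K = X \times Y$, completing the proof. I don't anticipate any real obstacle here; the only slightly non-obvious ingredient is recognising that the ``vertical'' part $L$ of $V$ is what controls whether a naive choice $K = C \times \{0\}$ fails, and that enlarging it to $C \times M$ exactly compensates.
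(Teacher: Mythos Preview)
Your proof is correct and follows essentially the same construction as the paper: both take $K = C \times M$ where $C$ is a complement of $V_X$ in $X$ and $M$ is a complement of the ``vertical'' subspace $L = \{y : (0,y) \in V\}$ in $Y$, and both verify $V \cap K = 0$ by the same projection argument. The only difference is cosmetic: the paper establishes $V + K = X \times Y$ by an explicit element-by-element decomposition, whereas you use rank--nullity and a dimension count, which is a bit slicker (and relies on finite-dimensionality, which the paper assumes throughout anyway).
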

\begin{proof}
	We give an explicit construction, where in fact we can choose $K$ to be a cartesian product of subspaces $U \times W$. Let
	\[ V_X = \{ x : (x,y) \in V \} \qquad H = \{ y : (0,y) \in V \} \]
	We argue that if $U \oplus V_X = X$ and $W \oplus H = Y$, then $(U \times W) \oplus V = X \times Y$.	
	First we prove that $(U \times W) \cap V = 0$: Indeed, if $(u,w) \in V$ for $u \in U, w \in W$, then $u \in V_X$, but that implies $u=0$. So we know $(0,w) \in V$, i.e. $w \in H$. Thus $w=0$.
	
	It remains to show that we can write every $(x,y)$ as $(u+v_1,w+v_2)$ with $u \in U, w \in W$ and $(v_1,v_2) \in V$.
	\begin{enumerate}
		\item We can write $x = u+v_1$ with $u \in U$ and $v_1 \in V_X$.
		\item We claim that there exists a $b \in W$ such that $(v_1,b) \in V$. Because $v_1 \in V_X$, there exists some $b' \in Y$ such that $(v_1,b') \in V$. We now decompose $b' = b + h$ for $b \in W, h \in H$. By definition of $H$, we have $(0,h) \in V$, so $(v_1,b)=(v_1,b')-(0,h) \in V$. 
		\item Write $y = w' + h$ with $w' \in W, h \in H$ and define $w = w' - b$ and $v_2 = h + b$. Then we have $w \in W$ and $(v_1,v_2) = (v_1,b)+(0,h) \in V$, and as desired
		\[ (u,w) + (v_1,v_2) = (x, w' - b + h + b) = (x,w' + h) = (x,y). \qedhere \]
	\end{enumerate}
\end{proof}

We can now prove the existence of conditionals in $\gaussex$. 

\begin{proof}[Proof of \Cref{thm:cond}]
	Let $\varphi \in \gaussex(A, X \times Y)$ be given by $(D,\tilde f,\widetilde \psi)$ where $\tilde f : A \to (X \times Y)/D$, $D \subseteq X \times Y$ and $\psi \in \gauss((X \times Y)/D)$. By Lemma~\ref{lemma:nicecomplement}, we can pick a complement $K \subseteq X \times Y$ of $D$ such that $K_X$ is a complement of $D_X$ in $X$. Under the identification $(X \times Y)/D \cong K$, we replace $\tilde f,\widetilde \psi$ with $f : A \to X \times Y$ and $\psi \in \gauss(X \times Y)$. \\
	
	\noindent Now we consider the morphism $x \mapsto f(x) + \psi$ in $\gauss(A, X \times Y)$ and find a conditional $f|_X \in \gauss(X \times A, Y)$. Informally, this means we can obtain $(X_1,Y_1) \sim f(a) + \psi$ as follows:
	\begin{align*}
	X_1 &\sim f_X(a) + \psi_X, & Y_1 &\sim g(x,a) + \xi 
	\end{align*}
	Similarly we can use conditionals in $\linrel^+$ to find a linear function $h : X \to Y$ and a subspace $H \subseteq Y$ such that $(X_2,Y_2) \sim D$ can be obtained as
	\begin{align*}
	X_2 &\sim D_X, & Y_2 &\sim h(X_2) + H
	\end{align*}
	Thus a joint sample $(X,Y) \sim f(x) + \psi + D$ can be obtained using the following process
	\begin{align*}
	X_1 &\sim f_X(a) + \psi_X, & X_2 &\sim D_X, & X &= X_1 + X_2 \\
	Y_1 &\sim g(X_1,a) + \xi, & Y_2 &\sim h(X_2) + H, & Y &= Y_1 + Y_2
	\end{align*}
	By construction we have $X_1 \in K_X, X_2 \in D_X$. Because $K$ was chosen such that $K_X \oplus D_X = X$, we can extract the individual values of $X_1,X_2$ from $X$ via the projections $P_{K_X}, P_{D_X} : X \to X$. A conditional for $\varphi$ is thus given by the formula
	\[ \varphi|_X(x,a) = g(P_{K_X}(x),a) + h(P_{D_X}(x)) + \xi + H \]
\end{proof}

\end{document}